\setlist[itemize]{nosep}
\setlist[enumerate]{nosep}
\newtheorem*{theorem*}{Theorem}
\newtheorem{theorem}{Theorem}
\newtheorem{lemma}{Lemma} 
\newtheorem{definition}{Definition} 
\newtheorem{corollary}{Corollary} 
\newtheorem{proposition}{Proposition} 
\newtheorem{step}{Step} 
\newtheorem{claim}{Claim}
\theoremstyle{remark}
\DeclareMathOperator{\ext}{ext}
\newcommand{\M}{\mathcal{M}}
\newcommand{\Mprob}{\Delta(\X)}
\newcommand{\F}{\mathcal{F}}
\newcommand{\X}{\mathcal{X}}
\crefname{equation}{}{}
\begin{document}
	
\title{Extreme Points and Large Contests*}
\thanks{*I am particularly indebted to Ioan Manolescu for his generous time. I also thank Yaron Azrieli, Christian Chambers, Christian Ewerhart, Dominik Inauen, Byeong-hyeon Jeong, Patrick Lahr, Igor Letina, Brendan Pass, Philipp Strack and Stefan Wenger for helpful discussions.}
	 
\author{Giovanni Valvassori Bolgè\textsuperscript{\textdagger}}\thanks{\textsuperscript{\textdagger}University of Fribourg, Department of Economics. \textit{Email: \href{mailto:giovanni.valvassoribolge@unifr.ch}{giovanni.valvassoribolge@unifr.ch}}}
	
\date{\today}

\begin{abstract} 

	In this paper, we characterize the extreme points of a class of multidimensional monotone functions. This result is then applied to large contests, where it provides a useful representation of optimal allocation rules under a broad class of distributional preferences of the contest designer. In contests with complete information, the representation significantly simplifies the characterization of the equilibria.
	
	\bigskip
	\noindent
	\textbf{JEL Codes:} C65, D72
	
	\noindent
	\textbf{Keywords:} Extreme Points, Monotone Functions, Large Contests 
\end{abstract}

\maketitle

\newpage

\section{Introduction}
The theory of contests has provided a unifying framework to model a wide range of economic environments.\footnote{For a thorough review, see \cite{konrad_strategy_2009}} Although the classic contributions consider a setting with a finite number of agents\footnote{See, \textit{inter alia}, \cite{lazear_rank-order_1981} and \cite{tullock_efficient_2001}.}, some recent contributions analyze contests with a continuum of agents.\footnote{See \cite{olszewski_large_2016}, \cite{adda_grantmaking_2024} and \cite{azrieli_success_2024} among others.}

In this paper, we consider the optimal contest design for a broad class of distributional preferences, and we are interested in characterizing the optimal allocation rule, namely the mapping from the (possibly noisy) performance of the agents to the prizes. We show that the best the contest designer can do is to set a two-threshold scheme: the optimal allocation rule is a step function with at most two steps.

This model provides a framework to rationalize several real-world examples of large contests, whose design involve assigning prizes through a combination of a random and deterministic selection procedure. Applications span college and university admissions, as well as funding schemes by private and public institutions.

For example, the Swiss National Science Foundation (SNSF) evaluates funding decisions for grants by selecting a threshold performance (`funding line');  grant proposals that place above the funding line will obtain funding, whereas those below the funding line will not receive any. Moreover, if multiple proposals place themselves at the funding line, a lottery might be involved to select those who will receive funding.\footnote{Details can be found at \textcolor{blue}{https://www.snf.ch/en/6cs2wnfJtcfFDL6o/page/evaluation-procedure}.}

In addition to this, the adoption of lotteries in admission for medical schools is seen as a tool to promote equal opportunity and improve diversity among students. In some European countries, such as the Netherlands and Germany, weighted or unweighted lotteries determine the allocation of some places which remain after a first round of admission in medical school.\footnote{For example, the University of Jena recently introduced a lottery (\textit{`losverfahren'}) to assign those places that are left out after a first round of selection; see \textcolor{blue}{https://jenamedia.de/uni-jena-extra-chance-auf-studienplatz-per-losverfahren}. At the University of Amsterdam, the admission cycle for the academic year 2026/2027 in medicine will take the form of an unweighted lottery. According to the Vice Dean of Education at the University of Amsterdam: ``Our aim is to promote equal opportunities and diversity among potential students. After all, it gives all candidates an equal chance of being admitted''. See \textcolor{blue}{https://student.uva.nl/artikelen/2025-bachelor-geneeskunde-uva-kiest-voor-loting-bij-selectie-van-eerstejaars-2}.}

For a broad class of distributional preferences of the contest designer,  our model provides an economic foundation for two-tier prize schemes, and it encompasses the real-world examples just described.

In order to derive the optimal contest design, we leverage the geometric properties of the set of admissible allocation rules. In particular, we provide a novel characterization of the extreme points of a particular class of multidimensional monotone functions which naturally arises in large contests.  Finally, we leverage our result to analyze the large-contest version of a canonical contest with finitely many agents, and we show that the equilibrium analysis significantly simplifies.

The paper is structured as follows. Section \ref{sec:extreme} introduces the mathematical framework and derives the extreme points result. Section \ref{sec:applications} then applies the result to large contests. Section \ref{sec:conclusions} concludes.

\pagebreak

\subsection*{Related Literature} This paper contributes to the theory of large contests and to a growing literature on the characterization of extreme points in economics.

Following the pioneering contributions by \cite{olszewski_large_2016} and \cite{olszewski_large_2016}, \cite{morgan_limits_2022} and \cite{azrieli_success_2024} have considered contests with a continuum of agents and provided key insights into how this setting significantly simplifies equilibrium analysis with respect to the classic contest models with a finite number of agents. In particular, \cite{azrieli_success_2024} provide an axiomatization of a class of Contest Success Functions (CSFs) in large contests. They show that any CSF satisfying a set of axioms arises as a result of the following steps: the performance of an agent is randomly mapped into a performance; the contest designer sets a performance threshold so as to satisfy a budget constraint; the agents whose performances exceed this cutoff win a prize. They refer to this class of CSFs as \textit{Random Performance Functions} (RPFs). We contribute to this literature by considering the optimal contest design by a principal who is able to design the allocation rule. We show that RPFs are optimal under specific distributional goals of the contest designer. To the best of my knowledge, optimal contest design has been addressed only by \cite{letina_optimal_2023} in a setting with finitely many agents.

A recent literature systematically applies the study of extreme points of monotone functions to several economic settings; e.g., \cite{kleiner_extreme_2021}, \cite{yang_monotone_2024} and \cite{yang_multidimensional_2025}. These papers consider settings in which the revelation principle applies, and hence the class of monotone functions studied maps the (possibly multidimensional) types of the agents into a payoff. In this paper, we consider allocation rules in contests, namely functions which are monotone in agents' effort, but which also depend on the overall distribution of effort exerted by all the other agents. Hence, the results from the papers cited above are not readily applicable, and a new result on extreme points of this class of monotone function is needed.

A similar model is considered in \cite{lagziel_screening_2022}. However, importantly in this paper we consider a large contest environment. Therefore, we can leverage a law-of-large-number type of argument to establish the optimality of threshold strategies, which are not necessarily optimal in \cite{lagziel_screening_2022}.

From a methodological perspective, a closely related paper is \cite{byeong25}. They consider a mechanism design setting and they derive the optimality of a two-tier quota mechanism, which is analogous to the one derived in this paper. However, the economic environment is fundamentally different, and it is not related to contests.

\section{Extreme Points} \label{sec:extreme}
Let $\X \subset \mathbb{R}^{d}$ be a compact set and denote $\Delta(\X)$ the set of Borel measures on $\X$.

Let $\F$ be the set of monotone functions
\begin{equation*}
    \F = \left\{ f: \X \times \Delta(\X) \longrightarrow[0,1] \ , \ f(\cdot, \mu) \ \text{increasing} \ \text{for every} \ \mu \in \Delta(\X)\right\} \ .
\end{equation*}
We characterize the extreme points of the set of functions
\begin{equation*}
    \F^*= \left\{ f \in \F \ :  \int_{\X} f(x, \mu) \ d\mu(x)\leq k  , \  k \in (0,1) \right\} \ .
\end{equation*}
A result by \cite{winkler_extreme_1988} shows that the set of extreme points of $\F^*$ is a convex combination of at most two extreme points of $\F$; hence, we show that they are a mixture of at most two indicator functions defined on up-sets. Moreover, by a characterization of \cite{yang_multidimensional_2025}, these two up-sets are ordered by set inclusion.
\begin{theorem}\label{thrm}
    The set of extreme points of $\F^*$ is given by
\begin{equation*}
    \ext(\F^*)= \left\{f^* \in \F^* : f^*(x, \mu)=\lambda\mathbb{I}_{A_1}+(1-\lambda)\mathbb{I}_{A_2} \right\}
\end{equation*}
for each $\mu \in \Delta(\X)$ and some $A_1 \subseteq A_2$.
\end{theorem}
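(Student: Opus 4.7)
The plan is to combine the two structural ingredients that the authors signal right before the statement, and then to address the nesting by a direct perturbation argument. First, since both the monotonicity of $f(\cdot,\mu)$ and the integral constraint $\int f(x,\mu)\,d\mu(x)\leq k$ act only on the $\mu$-slice $f(\cdot,\mu)$, an element $f^*\in\F^*$ is extreme if and only if, for every $\mu\in\Delta(\X)$, the slice $f^*(\cdot,\mu)$ is extreme in
\begin{equation*}
\F^*_{\mu} \;:=\; \Bigl\{\, g:\X\to[0,1]\ \text{increasing},\ \int_{\X} g\,d\mu\leq k\,\Bigr\}.
\end{equation*}
It therefore suffices to produce the representation pointwise in $\mu$ and then lift.

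Fix $\mu$. The base set $\F_{0}=\{g:\X\to[0,1]\ \text{increasing}\}$ has the well-known extreme-point structure $\ext(\F_{0})=\{\mathbb{I}_{A}:A\subseteq\X\ \text{is an up-set}\}$, which underlies the literature on monotone functions cited in the introduction. Since $\F^*_{\mu}$ is $\F_{0}$ intersected with a single affine inequality, \cite{winkler_extreme_1988} with $n=1$ constraint implies that each extreme point of $\F^*_{\mu}$ is a convex combination of at most two extreme points of $\F_{0}$, giving
\begin{equation*}
f^*(\cdot,\mu)\;=\;\lambda\,\mathbb{I}_{A_{1}}\;+\;(1-\lambda)\,\mathbb{I}_{A_{2}}
\end{equation*}
for some $\lambda\in[0,1]$ and up-sets $A_{1},A_{2}\subseteq\X$ (both possibly depending on $\mu$).

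The only substantive step, and the main obstacle, is upgrading this to the nesting $A_{1}\subseteq A_{2}$. When $\lambda=1/2$ one can simply rewrite the representation as $\tfrac{1}{2}\mathbb{I}_{A_{1}\cap A_{2}}+\tfrac{1}{2}\mathbb{I}_{A_{1}\cup A_{2}}$, which is already nested. When $\lambda\neq 1/2$, say $\lambda<1/2$, the function $f^*$ attains the values $1,\lambda,1-\lambda,0$ on the four cells of the partition generated by $A_{1},A_{2}$, and the monotonicity of $f^*$ forces no element of $A_{2}\setminus A_{1}$ to lie below any element of $A_{1}\setminus A_{2}$, making the two middle cells mutually order-incomparable. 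I would then perturb by
\begin{equation*}
g\;=\;\mathbb{I}_{A_{1}\setminus A_{2}}\;-\;c\,\mathbb{I}_{A_{2}\setminus A_{1}},\qquad c=\mu(A_{1}\setminus A_{2})/\mu(A_{2}\setminus A_{1}),
\end{equation*}
so that $\int g\,d\mu=0$ when the budget binds (and with $c$ free otherwise). Order-incomparability of the middle cells guarantees that $f^*\pm\epsilon g$ remains increasing, takes values in $[0,1]$, and lies in $\F^*_{\mu}$ for small $\epsilon>0$, contradicting the extremality of $f^*$. Hence one of $A_{1}\setminus A_{2}$ or $A_{2}\setminus A_{1}$ must be empty, which after relabeling yields $A_{1}\subseteq A_{2}$ and closes the argument. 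The delicate point within this step is the uniform verification that the perturbation stays within $\F^*_{\mu}$ when $A_{1},A_{2}$ contain comparable points across the partition; this reduces to the order-incomparability observation, which is precisely the content of the structural characterization referenced from \cite{yang_multidimensional_2025}.
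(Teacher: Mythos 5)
Your argument is correct and shares the paper's overall skeleton (reduce to two-indicator convex combinations via Winkler/Dubins, then establish nesting), but it departs from the paper in two useful ways. First, you make the slice decomposition explicit: since the budget constraint $\int f(\cdot,\mu)\,d\mu\le k$ and the monotonicity requirement both act $\mu$-by-$\mu$, $\F^*$ is a product $\prod_\mu \F^*_\mu$ and extremality is equivalent to slice-wise extremality; you then apply Winkler with $n=1$ constraint \emph{per slice}. The paper instead proves only the ``extremal in $\F$ implies slice-extremal'' direction as a lemma and invokes Dubins/Winkler on $\F^*$ globally, which as written involves one affine constraint for every $\mu$ rather than a single constraint --- your per-slice reading is in fact what makes that step airtight. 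Second, and more substantively, for the nesting $A_1\subseteq A_2$ the paper simply cites Proposition 3.1 of \cite{yang_multidimensional_2025}, whereas you give a self-contained perturbation: the rewriting $\tfrac12\mathbb{I}_{A_1}+\tfrac12\mathbb{I}_{A_2}=\tfrac12\mathbb{I}_{A_1\cap A_2}+\tfrac12\mathbb{I}_{A_1\cup A_2}$ for $\lambda=1/2$, and otherwise the budget-neutral perturbation $g=\mathbb{I}_{A_1\setminus A_2}-c\,\mathbb{I}_{A_2\setminus A_1}$, whose admissibility follows from the order-incomparability of the two middle cells. Note that this incomparability is an immediate consequence of $A_1,A_2$ being up-sets (if $x\in A_2\setminus A_1$ and $x\le y\in A_1\setminus A_2$, then $y\in A_2$, a contradiction), so you do not need to lean on \cite{yang_multidimensional_2025} at all, contrary to your closing remark; your proof of the nesting is genuinely elementary and independent of that reference. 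Two small loose ends you should tie up: the case $\lambda\in\{0,1\}$ (trivially a single indicator, hence nested) and the case $\mu(A_2\setminus A_1)=0$ with a binding budget, where $c$ is undefined --- there you can instead perturb by $g=\mathbb{I}_{A_2\setminus A_1}$ alone, which is $\mu$-null and hence budget-neutral. With those noted, your route buys a fully self-contained nesting argument at the cost of slightly more case analysis.
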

In other words, extreme points in $\F^*$ are step functions on $\X$ with at most two jumps.

\subsection*{Maximizing a Functional} Consider the case $\X=[\underline{x}, \overline{x}]$.\footnote{For notational convenience, in this subsection we assume that $\X=[0,1]$.} We are interested in maximizing a functional $\Pi: \F^* \longrightarrow \mathbb{R}$; formally, we are interested in the following problem
\begin{equation} \label{eq:functional}
    \max_{f \in \F^*} \ \Pi(f) = \int_\X \Phi\left(f(x,\mu\right), x)  \ d\mu(x) \ .
\end{equation}
As in \cite{kleiner_extreme_2021}, define the majorization relation $\succ$ as follows.
\begin{definition}
    For each $\mu \in \Delta(\X)$ and for any two functions $f,g \in \F$, we say that $f$ \textit{majorizes} $g$, $f \succ g$ if
    \begin{equation*}
        \int_t^1 f(t, \mu) \ d\mu(t) \geq \int_t^1 g(t, \mu) \ d\mu(t) \ .
    \end{equation*}
\end{definition}
An elegant result due to \cite{FanLorentz} gives necessary and sufficient conditions for a functional to be monotonic in the majorization order.
\begin{theorem}[\cite{FanLorentz}] \label{thrmFanLorentz}
    Let $\Phi: \X \times \Delta(\X) \times \X \longrightarrow [0,1]$. Then,
    \begin{equation*}
        \int_X \Phi(f(x,\mu), x) \ d\mu(x) \geq \int_X \Phi(g(x,\mu), x) \ d\mu(x)
    \end{equation*}
    holds for any two functions $f,g \in \F$ such that $f \succ g$ if and only if $\Phi(u,t)$ is convex in $u$ and supermodular in $(u,t)$.
\end{theorem}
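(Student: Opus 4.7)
The plan is to treat the two implications separately, with sufficiency carrying the analytic weight.

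For sufficiency, assume $\Phi(u,t)$ is convex in $u$ and supermodular in $(u,t)$, and suppose $f \succ g$. Convexity in $u$ yields the subgradient inequality
\[
\Phi(f(x,\mu),x)-\Phi(g(x,\mu),x)\geq \phi(g(x,\mu),x)\bigl(f(x,\mu)-g(x,\mu)\bigr),
\]
where $\phi(\cdot,t)$ is a measurable selection from the subdifferential $\partial_u\Phi(\cdot,t)$. By convexity $\phi$ is nondecreasing in $u$; by supermodularity it is nondecreasing in $t$; and since $g(\cdot,\mu)$ is itself nondecreasing, the composition $h(x):=\phi(g(x,\mu),x)$ is nondecreasing in $x$. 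Integrating the subgradient inequality against $\mu$, the problem reduces to showing that $\int_\X h(x)\bigl(f(x,\mu)-g(x,\mu)\bigr)\,d\mu(x)\geq 0$.

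The key step is a Riemann--Stieltjes integration by parts against the tail function $G(t):=\int_t^{\overline{x}}(f-g)(s,\mu)\,d\mu(s)$. The majorization hypothesis asserts $G\geq 0$, and of course $G(\overline{x})=0$. Standard integration by parts then yields
\[
\int_{\underline{x}}^{\overline{x}} h(x)\bigl(f-g\bigr)(x,\mu)\,d\mu(x)=h(\underline{x})\,G(\underline{x})+\int_{\underline{x}}^{\overline{x}} G(x)\,dh(x).
\]
The second summand is nonnegative since $G\geq 0$ and $h$ is nondecreasing. The boundary term requires $h(\underline{x})\geq 0$, i.e.\ monotonicity of $\Phi$ in $u$; this is not listed as a hypothesis but can be recovered from the iff form of the statement by testing against the pair $g\equiv 0$, $f=\mathbb{I}_{[t,\overline{x}]}$, which forces $\Phi(1,x)\geq \Phi(0,x)$ and, together with convexity, gives $\phi\geq 0$.

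For necessity, I would construct minimal majorizing pairs that isolate each property in turn. Comparing a constant function $u\equiv c$ with a two-step function sharing the same total integral reduces the hypothesis to a Jensen-type inequality and forces convexity of $\Phi(\cdot,t)$. Swapping values across two locations $t_1<t_2$ and two levels $u_1<u_2$ in a way that preserves monotonicity of the resulting functions produces a pair $f\succ g$ whose integral difference is exactly the supermodular increment $\Phi(u_2,t_2)+\Phi(u_1,t_1)-\Phi(u_2,t_1)-\Phi(u_1,t_2)$, so the hypothesis forces supermodularity. The main obstacle is the interplay between the upper-tail form of majorization used here and the integration-by-parts step: classical Hardy--Littlewood--P\'olya majorization equates total masses and makes the boundary term vanish automatically, whereas here one has only the weaker one-sided condition, so monotonicity of $\Phi$ in $u$ enters as an implicit ingredient that must be extracted from the hypothesis rather than assumed outright.
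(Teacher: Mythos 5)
The paper does not prove this statement: it is quoted as a classical result of Fan and Lorentz, so there is no in-paper proof to compare yours against. Judged on its own terms, your outline follows the standard route for this theorem (subgradient linearization plus Abel summation / integration by parts against the tail function $G(t)=\int_t^{\overline{x}}(f-g)\,d\mu$, and localized perturbations for necessity), and the core computation is right: $\int h\,(f-g)\,d\mu = h(\underline{x})G(\underline{x})+\int G\,dh$ with $h(x)=\phi(g(x,\mu),x)$ nondecreasing by convexity, supermodularity, and monotonicity of $g(\cdot,\mu)$.

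The genuine gap is exactly where you flag it, and your proposed repair does not work. To sign the boundary term $h(\underline{x})G(\underline{x})$ you need $\phi\geq 0$, i.e.\ $\Phi$ nondecreasing in $u$. You propose to ``recover'' this from the iff form of the statement by testing the pair $g\equiv 0$, $f=\mathbb{I}_{[t,\overline{x}]}$ --- but that test extracts monotonicity from the \emph{conclusion} of the sufficiency direction, which is precisely what you are trying to establish; the argument is circular. Worse, the gap cannot be closed, because the ``if'' direction is false as stated: $\Phi(u,t)=1-u$ maps into $[0,1]$, is convex in $u$ and (being separable) supermodular, yet for $f\equiv 1\succ g\equiv 0$ one gets $\int\Phi(f,x)\,d\mu=0<1=\int\Phi(g,x)\,d\mu$. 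The theorem needs either the additional hypothesis that $\Phi$ is nondecreasing in $u$ (the right fix for the one-sided majorization used here) or the classical Hardy--Littlewood--P\'olya normalization $\int f\,d\mu=\int g\,d\mu$ in the definition of $\succ$, which kills the boundary term since $G(\underline{x})=0$. In the paper's application the budget constraint holds with equality, so the intended use is the second case; but as a proof of the theorem as literally stated, your argument has an unfixable hole that reflects a defect in the statement rather than in your method.
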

By \textit{Bauer's Maximum Principle}, we know that if $\Phi$ is convex in $f$, the maximizers of \eqref{eq:functional} can be found among the extreme points of the feasible set. Moreover, the set of extreme points $\F^*$ is naturally ordered with respect to the majorization order.

Theorems \ref{thrm} and \ref{thrmFanLorentz} yield simple solutions to \eqref{eq:functional}. In the next section, we show the usefulness of these results in the framework of large contests. 


\section{Large Contests} \label{sec:applications}
There is a measure-one mass of agents. They can choose an effort level $x\in \mathcal{X}=[\underline{x}, \overline{x}]$. Define $\Delta(\X)$ as the set of probability measures on $\X$. A contest designer needs to choose an \textit{Allocation Rule}; i.e., a mapping $q: \X \times \Delta(\X) \longrightarrow [0,1]$. Following \cite{azrieli_success_2024}, we assume the designer has a budget sufficient to give prizes to a mass $k\in (0,1)$ of agents. Denote $\mu \in \Delta(\X)$ the potential distribution of efforts (the `competition') faced by an agent with effort $x$.

\begin{definition}
    For every $x\in \X$ and $\mu \in \Delta(\X)$, an \textit{Allocation Rule} $q:\X \times \Delta(\X) \longrightarrow [0,1]$ is non-decreasing in $x$ and satisfies
    \begin{equation} \label{eq:mktclrgen}
        \int q(x, \mu)d\mu(x)=k .
    \end{equation}
\end{definition}
The interpretation of (\ref{eq:mktclrgen}) is the following: For any distribution of efforts induced by a strategy profile, the total mass of winners will be equal to $k$. In other words, the prizes will be allocated, regardless of the equilibrium distribution of efforts.\footnote{All results go through even in the case where the designer can choose not to allocate a share of the prizes.}

\subsection{Complete Information} \label{sec:complete} Consider the case of perfectly observable effort. Exerting effort has a cost $c(x)$, where $c(\cdot)$ is a convex function. Winning a prize $V$ yields utility $u(V)$, where $u(\cdot)$ is a continuous and non-decreasing function. Agents' payoff function is given by
\begin{equation} \label{eq:agentcomplete}
 U(x)=q(x,\mu)u(V) - c(x) \ .  
\end{equation}
In this setting, $q(x, \mu)$ is a \textit{CSF}\footnote{When effort is perfectly observable, the distinction between Allocation Rule and CSF is immaterial.}; i.e., for a given competition $\mu \in \Mprob$, it specifies the probability of winning a prize for an agent exerting effort $x$.

Denote $\mathbf{M}$ for a probability measure on $(\Mprob, \mathcal{B}(\Mprob))$.\footnote{Endowed with the topology of weak convergence, $\Mprob$ has a natural notion of $\sigma$-algebra. Formally, $\mu_n \rightarrow \mu \iff \int_{\X}g d\mu_n \rightarrow \int_\X g d\mu \ $ , for any $g$ continuous.}
The payoff of the designer is given by
\begin{equation} \label{payoffexp}
    \Pi=\int_{\Mprob} \int_\X q(x^*,\mu)\pi(x^*)\ d\mu(x^*) \ d\mathbf{M}(\mu) \ ,
\end{equation}
where $x^*$ maximizes \eqref{eq:agentcomplete}, and $\pi(\cdot)$ is a continuous function.

The specification in \eqref{payoffexp} is consistent with very different distributional preferences of the designer, depending on the the monotonicity of $\pi(x)$.

Suppose that $\pi(x)$ is increasing; intuitively, the designer cares relatively more about the better-performing agents. As an example, consider a private foundation sponsoring fellowships for students to pursue graduate studies. It might be argued that the foundation is interested in awarding these fellowships to the best-performing students\footnote{Of course, if performance is noisy, a best-performing student might not necessarily be one of the students who exerted the most effort. However, by a law-of-large-number type of argument (ensured by the weak convergence of measures), on average it will be the case that better-performing students will be those who have exerted more effort.}, as it has a stake in their future success, but not necessarily in all the other applicants'.

On the other hand, suppose that $\pi(x)$ is decreasing. This specification might reflect the designer's concerns about equal opportunity among competing agents. Alternatively, the designer might be concerned that agents have to exert extremely costly (or even wasteful) effort.\footnote{See \cite{OLSZEWSKI2019101} and the discussion therein about cram schools in Korea.}

The designer chooses an Allocation Rule satisfying \eqref{eq:mktclrgen} in order to maximize \eqref{payoffexp}. The following result readily follows from Theorems \ref{thrm} and \ref{thrmFanLorentz}.
\begin{theorem} \label{optimalq}
    For any $\mathbf{M}$, the set of maximizers is such that 
    \begin{itemize}
        \item If $\pi(x)$ is increasing, the optimal Allocation Rule is given by
    \begin{equation} \label{optimalf}
q^*(\cdot, \mu) = \begin{cases}
    &1 \qquad x> \tilde{x} \\
    &0 \qquad x< \tilde{x}  \ ,
\end{cases}
\end{equation}
where, for each $\mu \in \Delta(\X)$, $\tilde{x} \in \X$ is the unique solution to \eqref{eq:mktclrgen};
\item If $\pi(x)$ is decreasing, the optimal Allocation Rule is given by $q^*(x, \mu)= \{\bar{q}\}$, where $\bar{q}(x,\mu)=k$, for any $x \in \X$ and $\mu \in \Delta(\X)$;
\item If $\pi(x)$ is non-monotonic, the optimal Allocation Rule is given by
\begin{equation} 
q^*(\cdot, \mu) = \begin{cases}
    &1 \qquad x> x'' \\
    &\alpha \qquad x' \leq x \leq x''\\
    &0 \qquad x< x'  \ ,
\end{cases}
\end{equation}
for some $x',x'' \in X$ and $\alpha \in [0,1]$.
    \end{itemize}
\end{theorem} 

Theorem \ref{optimalq} characterizes the optimal allocation rules. It provides an economic foundation for the allocation rules discussed in the Introduction. Moreover, Theorem \ref{optimalq} significantly simplifies the characterization of equilibria in this class of games with complete information. To this end, consider the case in which $\pi(x)$ is increasing. Then, it is straightforward to derive the following result.
\bigskip

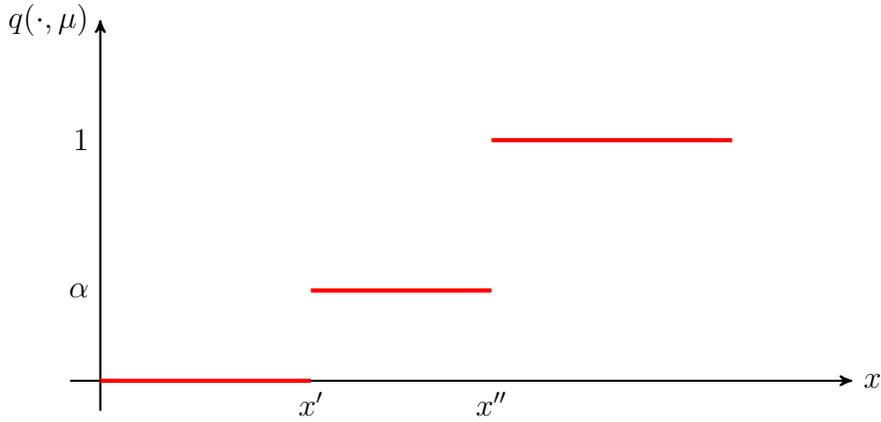
\begin{figure}[ht]
\centering
    \begin{tikzpicture}[-,>=stealth',auto,node distance=1.2cm,
  thick,main node/.style={transform shape, circle,draw,font=\rmfamily, minimum width=5pt}, scale=0.8]

    \draw[->] (-5,0) -- (8,0) node[right]{$x$};
    \draw[->] (-4.5,-0.5) -- (-4.5,6) node[left]{$q(\cdot, \mu)$};

  \draw[ultra thick, red] (-4.5,0) -- (-1,0);
  \draw[ultra thick, red] (-1,1.5) -- (2,1.5);
  \draw[ultra thick, red] (2,4) -- (6,4);
   \draw (-4.5,1.5)node[left]{$\alpha$};
   \draw (-1,0)node[below]{$x'$};
   \draw (2,0)node[below]{$x''$};
   \draw (-4.5,4)node[left]{1};
\end{tikzpicture}
\caption{An optimal allocation rule with two steps.}
\label{fig:allocationrule}
\end{figure}

\begin{proposition}
    Suppose $\pi(x)$ is increasing. Then, the (essentially) unique equilibrium is for a share $k$ of agent to exert maximal effort $x^*=\overline{x}$; and for the remaining agents to stay out of the contest.
\end{proposition}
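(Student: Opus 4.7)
The plan is to apply Proposition~\ref{optimalq} to reduce the equilibrium analysis to a two-point support calculation.

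First, by Proposition~\ref{optimalq} the optimal rule is the step function $q^*(x,\mu) = \mathbb{I}_{x > \tilde{x}(\mu)}$. For any candidate equilibrium distribution $\mu^*$ with induced threshold $\tilde{x}^* = \tilde{x}(\mu^*)$, the agent's payoff $U(x) = q^*(x,\mu^*) u(V) - c(x)$ equals $-c(x)$ on $[\underline{x},\tilde{x}^*)$ and $u(V)-c(x)$ on $(\tilde{x}^*,\overline{x}]$, and is strictly decreasing on each piece because $c$ is convex and strictly increasing. Any effort in $(\underline{x},\tilde{x}^*)$ is thus strictly dominated by $\underline{x}$, while any effort strictly above the smallest winning effort $x^*$ is strictly dominated by $x^*$. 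Hence $\mu^*$ must be supported on the two atoms $\{\underline{x}, x^*\}$.

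Second, I would pin down $x^*$ and the masses from market clearing and indifference. The constraint $\int q^* \, d\mu^* = k$ forces $\mu^*(\{x^*\}) = k$ and $\mu^*(\{\underline{x}\}) = 1-k$. Weak best response at both atoms yields the rent-dissipation identity $c(x^*) - c(\underline{x}) = u(V)$, which uniquely pins down $x^*$ by strict monotonicity of $c$. Under the implicit assumption that the prize is large enough to cover the maximum cost of participation, this binds at the upper boundary, $x^* = \overline{x}$.

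The main obstacle is the non-uniqueness of $\tilde{x}(\mu^*)$ when $\mu^*$ has an atom of mass $k$ at $x^*$, since every threshold in $[\underline{x}, x^*)$ clears the market. This is reconciled by invoking the two-step form guaranteed by Theorem~\ref{thrm}, with the tie-breaking value $\alpha = 1$ placed at the atom, so that the equilibrium rule is effectively $q^*(\cdot, \mu^*) = \mathbb{I}_{x = \overline{x}}$; any lower threshold would leave winners with a profitable downward deviation. Essential uniqueness then follows because any other support structure for $\mu^*$ violates either market clearing or the strict best-response inequality at one of the two atoms, and any reshuffling of mass within $\{\underline{x}, \overline{x}\}$ violates market clearing.
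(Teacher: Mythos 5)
Your proof is correct in substance but takes a genuinely different route from the paper's. The paper argues by elimination of candidate profiles: it supposes full pooling at an interior $\tilde{x}<\overline{x}$, notes that market clearing then forces $q^*(\tilde{x},\delta_{\tilde{x}})=k<1$, and kills this candidate with the upward deviation to $\tilde{x}+\epsilon$ that produces a discontinuous jump in winning probability; it then kills full pooling at the top by computing $ku(V)-c(\overline{x})<u(V)-c(\overline{x})=0$, so rationed winners would rather stay out. You instead characterize the support of an arbitrary candidate equilibrium distribution (two atoms, at $\underline{x}$ and at the lowest winning effort), and then pin down locations and masses by market clearing and indifference. Your route is more systematic: the paper's proof only explicitly eliminates the two degenerate (pooling) candidates, whereas yours covers general $\mu^*$. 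Two caveats. First, your claim that \eqref{eq:mktclrgen} ``forces $\mu^*(\{x^*\})=k$'' is too quick: market clearing only forces $\mu^*(\{x^*\})\,q^*(x^*,\mu^*)=k$, and the over-subscribed case $\mu^*(\{x^*\})=m>k$ with fractional winning probability $k/m$ at the atom is exactly the configuration the paper's second step rules out. It is ruled out in your framework too --- indifference with the lottery gives $\tfrac{k}{m}u(V)=c(x^*)-c(\underline{x})$, which together with $u(V)=c(\overline{x})-c(\underline{x})$ yields $m=k$ --- but you need to run the indifference condition with the fractional probability rather than assuming $q^*=1$ at the atom from the outset. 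Second, your closing remark that the prize being ``large enough'' makes the indifference condition bind at $x^*=\overline{x}$ is really the same implicit assumption the paper makes when it defines $\bar{x}$ by $u(V)=c(\bar{x})$ and identifies it with the top of the effort space; if $u(V)<c(\overline{x})-c(\underline{x})$ the rent-dissipation point is interior and the stated conclusion would have to be read with $\overline{x}$ replaced by that break-even effort. Neither caveat is a flaw relative to the paper, which is no more explicit on either point.
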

\begin{proof}
Define $\Bar{x}$ as $u(V)=c(\Bar{x})$. Suppose there is an interior equilibrium effort level $\tilde{x}<\overline{x}$. Then, from \eqref{eq:mktclrgen} we know that
\begin{align*}
    \int_\mathcal{X}q^*(x,\delta_{\tilde{x}})\ d\delta_{\tilde{x}}(x) &= \\
    q^*(\tilde{x},\delta_{\tilde{x}}) &= k \ .
\end{align*}
Hence, the optimal $q^*$ is given by
\begin{equation} \label{optimalfdirac}
q^*(\cdot, \mu) = \begin{cases}
    &1 \qquad x > \tilde{x} \\
    &k \qquad x =\tilde{x}  \\
    &0 \qquad x < \tilde{x}  \ ,
\end{cases}
\end{equation}
But then, increasing the effort to $\tilde{x}+ \epsilon$ yields a discontinuous jump in payoff, and therefore $\tilde{x}$ cannot constitute an equilibrium effort level. Moreover, all agents exerting maximial effort cannot constitute an equilibrium either. In fact,
\begin{align*}
        U(x) &= q^*(\Bar{x}, \delta_{\Bar{x}})u(V)-c(\Bar{x}) \\
             &=ku(V)-c(\Bar{x})\\
             &<u(V)-c(\Bar{x})\\
             &=0
    \end{align*}
Then, the only equilibrium can thus be $\overline{x}$ for a share $k$ of agents, whereas the remaining $1 - k$ stay out of the contest. All agents earn zero payoff.
\end{proof}
Intuitively, the discontinuity at the threshold value $\tilde{x}$ in \eqref{optimalfdirac} gives agents an incentive to slightly increase effort so as to gain a discontinuous jump in the probability of winning. Therefore, the contest designer induces the agents to increase effort until their utility gets to zero: either they stay out, or they exert maximal effort.

Similarly, when $\pi(x)$ is non-monotone, the designer might prefer agents to exert some intermediate level of effort. Then, by setting the two thresholds appropriately, the designer can induce agents to do so by extending the jump in the middle, so as not to give them the incentive to slightly increase their effort level, for any given competition $\mu$ they face.

\subsection{Incomplete Information}
Suppose now that performance is noisy. In particular, for any $\mu \in \Delta(\X)$, each effort level induces a distribution $F(x, \mu)$ of performance. We assume $F(\cdot, \mu)$ to be continuous and non-decreasing. Agents' payoffs are now given by
\begin{equation} \label{eq:agentincomplete}
    U(x)=F(x,  \mu)u(V) - c(x) .
\end{equation}

With a slight abuse of notation, we refer to $F_q(x, \mu)$ as a \textit{Contest Success Function} induced by the Allocation Rule $q$. Clearly, for any two allocation rules $q'$ and $q''$, it holds that $F_{q'} \succ F_{q''}$ if and only if $q'\succ q''$. Therefore, we have the following result.
\begin{corollary} \label{corol:incomplete}
    Theorem \ref{optimalq} still holds under incomplete information.
\end{corollary}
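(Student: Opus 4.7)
The plan is to transfer the complete-information argument of Proposition \ref{optimalq} to the incomplete-information setting by passing through the induced CSF $F_q$, leaning on the stated equivalence $F_{q'} \succ F_{q''} \iff q' \succ q''$. The strategy is that every step in the proof of Proposition \ref{optimalq}---Topkis-type monotonicity of the best response, linearity of the designer's payoff in the instrument, Bauer's Maximum Principle, the extreme-point description from Theorem \ref{thrm}, and the Fan--Lorentz selection criterion (Theorem \ref{thrmFanLorentz})---has a direct analogue once the instrument is re-read as $q$ with objective expressed through $F_q$.

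First, I would verify that the structural ingredients carry over. Because $F(\cdot,\mu)$ is non-decreasing and continuous and $u$ is non-decreasing, the agent's payoff $F_q(x,\mu)u(V)-c(x)$ is supermodular in $(q,x)$, and a direct analogue of Lemma \ref{lemma:topkis} gives that $x^*(q)$ is non-decreasing and insensitive to pointwise perturbations of $q$ at fixed performance. Consequently, the designer's payoff, now of the form $\int\int F_q(x^*,\mu)\,\pi(x^*)\,d\mu(x^*)\,d\mathbf{M}(\mu)$, is linear (hence convex) in $q$. The feasibility constraint $\int q(y,\mu_F)\,d\mu_F(y)=k$, with $\mu_F$ the performance distribution induced by $\mu$ and $F$, again places $q\in\F^*$, so the feasible set is the same convex body whose extreme points Theorem \ref{thrm} characterizes.

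Second, Bauer's Maximum Principle then identifies a maximizer at an extreme point of $\F^*$, which by Theorem \ref{thrm} is a step function with at most two steps. To pick among such extreme points according to the monotonicity regime of $\pi$, I would apply Theorem \ref{thrmFanLorentz} to the designer's objective viewed as a functional of $F_q$. The stated equivalence $F_{q'}\succ F_{q''}\iff q'\succ q''$ transfers the convexity--supermodularity conditions on $\Phi$ back to the instrument $q$, so the same three monotonicity regimes on $\pi$ select the same one-threshold, flat, and two-threshold extremal forms given in Proposition \ref{optimalq}.

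The main obstacle is the bookkeeping between the effort distribution $\mu$ and its induced performance distribution $\mu_F$: the market-clearing constraint binds on $\mu_F$, while the agents' optimization and the designer's welfare are naturally written in terms of $\mu$. Establishing the majorization equivalence rigorously amounts to checking that integration against the non-decreasing continuous kernel $F(\cdot,\mu)$ preserves the majorization order, a standard consequence of Fubini together with the monotonicity of $F$. Once this correspondence is set up carefully, the argument is a direct lift of the complete-information proof via the CSF representation.
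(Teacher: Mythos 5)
Your proposal follows essentially the same route as the paper: the paper's entire argument for this corollary is the single observation preceding it, namely that $F_{q'} \succ F_{q''}$ if and only if $q' \succ q''$, so that the optimization over allocation rules is unchanged when the objective is routed through the induced CSF. Your write-up simply fleshes out the bookkeeping (Topkis, linearity, Bauer, Fan--Lorentz) that the paper leaves implicit, so it is a correct and more detailed version of the same argument.
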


Corollary \ref{corol:incomplete} provides a theoretical foundation for the \textit{Random Performance Functions} axiomatized in \cite{azrieli_success_2024}: If $\pi(x)$ is increasing, the optimal allocation rule induces a CSF which takes the form of a RPF.
\section{Extensions} \label{sec:extensions}
\subsection{More General Objective Function} \label{sec:extensions}
A common specification for the designer's payoff is given by
\begin{equation} \label{otherpayoff}
    \Pi'= \int_\X \pi'(x)\ d\mu(x)  \ ,
\end{equation}
where $\pi'(x)$ is a continuous function.\\
In other words, the designer might be interested in maximizing some aggregate measure of effort, and s/he might also be concerned with the agents who do not win a prize. The result derived in Theorem \ref{optimalq} goes through also for the specification in \eqref{otherpayoff}, both when agents utilities are given by \eqref{eq:agentcomplete} and \eqref{eq:agentincomplete}.
\begin{lemma}
    The Allocation Rule which maximizes \eqref{otherpayoff} subject to \eqref{eq:mktclrgen} is the same as in Theorem $\ref{optimalq}$.
\end{lemma}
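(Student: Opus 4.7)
The plan is to reduce the problem to the framework of Proposition \ref{optimalq} by exploiting the fact that $\Pi'$ depends on the allocation rule $q$ only through the equilibrium effort distribution $\mu_q$ that $q$ induces. The overall strategy has three steps: (i) establish a monotone comparative statics result linking the majorization order $\succ$ on $q$ to the first-order stochastic dominance order on $\mu_q$; (ii) combine this with Theorem \ref{thrmFanLorentz} to localize any maximizer at an extreme point of $\F^*$; and (iii) identify the resulting two-step rule in each of the three monotonicity cases for $\pi'$.

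For step (i), I would apply Topkis' theorem to the individual agent's optimization \eqref{eq:agentcomplete} (and analogously to \eqref{eq:agentincomplete}), in the same spirit as Lemma \ref{lemma:topkis}. If $q_1,q_2 \in \F^*$ satisfy $q_1 \succ q_2$, so that $q_1$ concentrates winning probability at higher effort levels than $q_2$, then each agent's best response weakly increases, and hence the induced equilibrium distribution $\mu_{q_1}$ first-order stochastically dominates $\mu_{q_2}$. Continuity of $\pi'$ then translates the FOSD ranking into monotonicity of $\Pi'(q)=\int_{\X} \pi'(x)\,d\mu_q(x)$ with respect to $\succ$: $\Pi'$ is increasing in $\succ$ when $\pi'$ is increasing, and decreasing when $\pi'$ is decreasing.

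For steps (ii) and (iii), I would invoke Theorem \ref{thrmFanLorentz} with $\Phi(u,x)=\pi'(x)$, which is (vacuously) convex in $u$, so Bauer's Maximum Principle places the maximizer at an extreme point of $\F^*$, and Theorem \ref{thrm} then forces it to be a step function with at most two jumps. To pin down the precise form, I would argue case by case: when $\pi'$ is increasing, the FOSD-maximal equilibrium distribution is $k\,\delta_{\bar x}+(1-k)\,\delta_{\underline x}$, uniquely implemented by the single-threshold rule of Proposition \ref{optimalq}; when $\pi'$ is decreasing, the flat rule $\bar q(x,\mu)=k$ removes every marginal return to effort and therefore yields the FOSD-minimal $\mu=\delta_{\underline x}$; when $\pi'$ is non-monotone, a two-threshold rule with an intermediate tier lets the designer concentrate effort at an interior level, matching the third bullet of Proposition \ref{optimalq}.

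The main obstacle I anticipate is the rigorous justification of step (i): the map $q\mapsto \mu_q$ is in general a correspondence, since multiple equilibria can arise for a given $q$, and the FOSD comparison could a priori fail for unfortunate equilibrium selections. I would address this by noting that the concern is essentially moot for the candidate rules identified in (iii): the sharp discontinuities of the two-step optimal rules (as in \eqref{optimalfdirac}) pin the equilibrium down uniquely up to a null set, so on the restricted class of step-function rules the FOSD comparison is unambiguous, and the optimization can be safely carried out over this class.
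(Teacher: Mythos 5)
Your proposal is considerably more elaborate than the paper's argument, and the extra machinery is where it breaks down. The paper's proof is three lines: Lemma \ref{lemma:topkis} still applies to this specification (so the agents' best responses, and hence the induced effort distribution, are insensitive to the level of $q$), the map $q \mapsto \Pi'(q)$ is therefore trivially convex, and Bauer's Maximum Principle delivers an extreme-point maximizer exactly as in Proposition \ref{optimalq}. You instead try to prove a monotone comparative statics result in the majorization order, and that step does not go through. Majorization is an integral order, not a pointwise one: $q_1 \succ q_2$ does not imply $q_1(x') - q_1(x) \geq q_2(x') - q_2(x)$ for $x' > x$, so the agent's payoff $q(x,\mu)u(V) - c(x)$ does not have increasing differences in $(x,q)$ with respect to $\succ$, and Topkis' theorem gives you no ranking of best responses. (Your step (i) is also in tension with the paper's own Lemma \ref{lemma:topkis}, which you cite and which asserts $dx^*(q)/dq = 0$.) Without step (i), the claimed first-order stochastic dominance ranking of $\mu_{q_1}$ and $\mu_{q_2}$ is unsupported.

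Step (ii) is vacuous as stated: with $\Phi(u,x) = \pi'(x)$ the integrand does not depend on $u$ at all, so Theorem \ref{thrmFanLorentz} compares $\int_\X \pi'(x)\,d\mu(x)$ with itself and yields no discrimination among allocation rules; the entire dependence of \eqref{otherpayoff} on $q$ runs through the induced equilibrium distribution, which is precisely the channel that Fan--Lorentz does not see. Finally, your resolution of the equilibrium-multiplicity obstacle is circular: you argue that the selection issue is moot because the candidate rules identified in step (iii) pin down the equilibrium uniquely, but step (iii) is only reached by way of steps (i) and (ii). The repair is to follow the paper's shorter route: invoke Lemma \ref{lemma:topkis} to conclude that \eqref{otherpayoff} is (trivially) convex in $q$, and then apply Bauer's Maximum Principle on the compact convex set $\F^*$ together with Theorem \ref{thrm}, exactly as in the proof of Proposition \ref{optimalq}.
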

\begin{proof}
Lemma \ref{lemma:topkis} still applies to this specification. Moreover, $q \mapsto \Pi'(q)$ is trivially convex in $q$. \textit{Bauer's Maximum Principle} then applies as before.
\end{proof}
Interestingly, the Allocation Rule which is assumed in \cite{morgan_limits_2022} is not optimal; given that the objective function is non-monotone in effort, a two-tier Allocation Rule yields a higher expected payoff for the designer.

\subsection{Multiple Prize Levels} Suppose there are different prize levels $\mathbf{k}=\{k_1, k_2, \dots k_n \}$, such that $\sum_i^nk_i=k \in (0,1)$, with respective values $V_1, V_2, \dots, V_n$. The following result shows that it is without loss of generality for the contest designer to focus on a single prize.

    \begin{lemma} \label{lemma:concavity}
    $k \mapsto \Pi(k)$ is concave.
\end{lemma}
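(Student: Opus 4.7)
The plan is to use the standard value-function concavity argument, exploiting that $q \mapsto \Pi(q)$ is linear (established by Lemma \ref{lemma:topkis}) and that the feasible set $\F^*$ behaves nicely under convex combinations in the budget $k$. Throughout, I write $\F^*_k$ for the set $\F^*$ with budget parameter $k$, and $\Pi(k) = \max_{q \in \F^*_k} \Pi(q)$.

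Concretely, I would fix $k_1, k_2 \in (0,1)$ and $\lambda \in (0,1)$, set $k_\lambda = \lambda k_1 + (1-\lambda) k_2$, and pick maximizers $q_i^* \in \F^*_{k_i}$ with $\Pi(q_i^*) = \Pi(k_i)$ (existence follows from the compactness of $\F^*_{k_i}$ already noted at the start of the alternative proof, combined with continuity of the linear functional $\Pi$). I would then form $q_\lambda = \lambda q_1^* + (1-\lambda) q_2^*$ and verify that $q_\lambda$ is nondecreasing in $x$, takes values in $[0,1]$, and satisfies
\begin{equation*}
\int_\X q_\lambda(x,\mu) \ d\mu(x) \leq \lambda k_1 + (1-\lambda) k_2 = k_\lambda,
\end{equation*}
so that $q_\lambda \in \F^*_{k_\lambda}$. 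Linearity of $\Pi$ then gives $\Pi(q_\lambda) = \lambda \Pi(q_1^*) + (1-\lambda) \Pi(q_2^*)$, and optimality at budget $k_\lambda$ yields
\begin{equation*}
\Pi(k_\lambda) \geq \Pi(q_\lambda) = \lambda \Pi(k_1) + (1-\lambda) \Pi(k_2),
\end{equation*}
which is precisely the desired concavity.

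The only real obstacle is the linearity of $\Pi$ in $q$: without it, $\Pi(q_\lambda)$ need not split as $\lambda \Pi(q_1^*) + (1-\lambda) \Pi(q_2^*)$. The reason it does here is Lemma \ref{lemma:topkis}, which shows that the agents' optimal effort $x^*$ is insensitive to $q$; hence the integrand in $\Pi(q) = \int_{\Mprob} \int_\X q(x^*,\mu)\pi(x^*) \ d\mu(x^*) \ d\mathbf{M}(\mu)$ depends linearly on $q$, and the rest is the textbook concavity-of-value-function calculation.
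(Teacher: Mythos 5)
Your proposal is correct and is essentially the paper's own argument: pick maximizers at budgets $k_1$ and $k_2$, note their convex combination is feasible at $\lambda k_1 + (1-\lambda)k_2$ by convexity of the constraint, and use linearity of $\Pi$ in $q$ to conclude. You merely spell out the feasibility check and the source of linearity (Lemma \ref{lemma:topkis}) more explicitly than the paper does.
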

\begin{proof}
Denote $\F^{o}(k)$ as the set of maximizers. Since $\F^*$ is convex, we have that, for $f_1 \in \F^{o}(k_1)$ and $f_2 \in \F^{o}(k_2)$, and any $k_1, k_2 \in (0,1)$ 
\begin{align*}
    \Pi(\lambda k_1 + (1-\lambda) k_2) &\geq \Pi(\lambda f_1 + (1-\lambda)f_2) \\
    &=\lambda \Pi(f_1) + (1-\lambda)\Pi(f_2) \\
    &= \lambda \Pi(k_1) + (1-\lambda) \Pi(k_2) \ .
\end{align*}
\end{proof}

\subsection{Large Contests with Diversity Constraints}
Consider now the following variation of the model. In addition to choosing an effort level $x \in \X$, suppose agents have an observable characteristic $\theta \in \Theta=\{\theta_1, \theta_2\}$, with relative shares $\alpha_1, \alpha_2$, respectively.
Assume the contest designer faces an additional constraint to give a share $h$ of the prizes to the subset of agents with characteristic $\alpha_1$:
\begin{equation*}
    \int_{\X} q(x, \mu_{|\theta_1}) \ d \mu_{|\theta_1}(x)= h \ ,
\end{equation*}
with $k > \alpha_1 > h$.

This additional constraint might be interpreted as a diversity constraint, or a quota on the number of agents with a given characteristic.

For example, at EPFL in Switzerland, the number of students admitted to the Bachelor's programmes has recently been capped at 3000: Swiss students holding the relevant qualifications will be admitted without limitations, whereas foreign students will be admitted according to the number of remaining places.\footnote{See \textcolor{blue}{https://actu.epfl.ch/news/admissions-to-bachelor-programs-will-be-limited-st/}.}

The payoff of the designer can now be written as
\begin{align*} 
    \Pi&=\int_{\Mprob} \int_\X q(x,\mu)\pi(x)\ d\mu(x) \ d\mathbf{M}(\mu)\\
    &= \int_{\Mprob} \alpha_1 \int_\X q(x,\mu_{|\theta_1})\pi(x)\ d\mu_{|\theta_1}(x)+ \alpha_2 \int_\X q(x,\mu_{|\theta_2})\pi(x)\ d\mu_{|\theta_2}(x) \ d\mathbf{M}(\mu)\\
    &= \int_{\Mprob} \alpha_1 \Pi(\mu_{|\theta_1}) + \alpha_2 \Pi(\mu_{|\theta_2}) \ d\mathbf{M}(\mu) \ .
\end{align*}
The designer's problem is then equivalent to two subproblems, one for each subset of agents $\mu|_{\theta_i}$, $i=1,2$.
\section{Conclusion} \label{sec:conclusions}
This paper characterizes the set of extreme points of a class of multidimensional monotone functions and applies it to large contests. Several directions for future research remain to be pursued. First, from a mathematical perspective the results hinge on the assumption that the effort level has compact support. Relaxing this assumption bears economic interest, although it would significantly complicate the analysis.

Second, the results strongly follow from the additive separable specification of the agents' utility function in \eqref{eq:agentcomplete} and \eqref{eq:agentincomplete}. Assuming a more general functional form might preclude the possibility to apply Bauer's Maximum Principle. Understanding under which conditions the results are robust to a more general specification is an interesting endeavour, and it is left for future research.
\pagebreak
\appendix
\section{Proofs}
\subsection{Proof of Theorem \ref{thrm}}
\begin{proof} Define $\mathcal{F'}=\left\{ f:\X \longrightarrow [0,1], f   \ \text{increasing} \right\}$.
\begin{step}
    The set of extreme points of $\F$.
\end{step}
By \cite{Choquet1954}, the set of extreme points of $\F'$ is given by the functions $f=\mathbb{I}_{[a,1]}$ (or $g=\mathbb{I}_{(a,1]}$), where $\mathbb{I}_A$ denotes  the indicator function on $A$. The following lemma relates the extreme points of $\F'$ and $\F$.
\begin{lemma}
    Suppose $f$ is extremal in $\mathcal{F}$. Then, $f(\cdot, \mu$) is extremal in $\mathcal{F'}$.
\end{lemma}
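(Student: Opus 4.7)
The plan is to prove the contrapositive: assume there exists some $\mu_0 \in \Delta(\X)$ at which $f(\cdot, \mu_0)$ fails to be extremal in $\mathcal{F}'$, and produce a nontrivial convex decomposition of $f$ inside $\mathcal{F}$. By the definition of extremality, non-extremality of $f(\cdot, \mu_0)$ gives two distinct functions $g_1, g_2 \in \mathcal{F}'$ (both $[0,1]$-valued, increasing on $\X$) with $f(\cdot, \mu_0) = \tfrac{1}{2}g_1 + \tfrac{1}{2}g_2$.

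The key move is a pointwise pasting argument. Define $f_i : \X \times \Delta(\X) \to [0,1]$ for $i=1,2$ by
\begin{equation*}
f_i(x,\mu) = \begin{cases} g_i(x) & \text{if } \mu = \mu_0, \\ f(x,\mu) & \text{if } \mu \neq \mu_0. \end{cases}
\end{equation*}
Each $f_i$ lies in $\mathcal{F}$: the slice at $\mu_0$ is $g_i$, which is increasing on $\X$ by hypothesis, and every other slice coincides with $f(\cdot,\mu)$, which is increasing because $f \in \mathcal{F}$; values remain in $[0,1]$ by construction. By construction again, $\tfrac{1}{2}(f_1+f_2) = f$, and $f_1 \neq f_2$ since $g_1 \neq g_2$. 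This contradicts the assumed extremality of $f$ in $\mathcal{F}$, completing the argument.

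The main ``obstacle'' is really more of a structural remark than a difficulty: the constraint cutting $\mathcal{F}$ out of the ambient vector space is imposed slicewise (monotonicity in $x$ for each fixed $\mu$), so $\mathcal{F}$ has the structure of a product indexed by $\mu \in \Delta(\X)$, and extremality in a product of convex sets decouples coordinatewise. The only thing one must verify is that pasting one altered slice into $f$ does not destroy membership in $\mathcal{F}$, which is immediate from the slicewise nature of the defining constraint. (Note that, unlike later in the theorem, no budget constraint $\int f \, d\mu \le k$ enters here, so there is no coupling to worry about.) This lemma is thus a clean preparatory step; the genuine work lies in Winkler's decomposition and the up-set characterization used afterwards to pin down the two-jump structure of extreme points of $\mathcal{F}^*$.
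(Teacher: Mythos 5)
Your proof is correct and is essentially the same argument as the paper's: both proceed by contraposition, take a nontrivial convex decomposition of the slice $f(\cdot,\mu_0)$ in $\mathcal{F}'$, and paste the two pieces back into $f$ at $\mu_0$ to contradict extremality in $\mathcal{F}$. The only cosmetic difference is that you normalize the convex weight to $\tfrac{1}{2}$ while the paper keeps a general $t\in(0,1)$, which is an equivalent formulation of non-extremality.
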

\begin{proof}\footnote{I thank Stefan Wenger for suggesting this beautiful proof.}
Let $f$ be extremal in $\mathcal{F}$. Suppose that $\exists \mu_0$ such that $f(\cdot, \mu_0)$ is not extremal in $\mathcal{F'}$. Then, there exist $g_0$, $g_1 \in \mathcal{F'}$ and $t \in (0,1)$ such that $f(\cdot, \mu_0)=tg_0 + (1-t)g_1$. Define
\begin{equation}
f_i(\cdot, \mu)= \begin{cases}
    &f(\cdot, \mu) \qquad \mu \neq \mu_0 \\
    &g_i(\cdot) \qquad \ \ \mu=\mu_0
\end{cases}
\end{equation}
for $i=0,1$. But then we have that $f_0 \neq f_1$ and $f=tf_0 + (1-t)f_1$. Hence, $f$ is not extremal.
\end{proof}
A theorem in \cite{winkler_extreme_1988} shows that an extreme point of $\F^*$ is a convex combination of at most two extreme points of $\F$. Hence, extreme points of $\F^*$ are of the form
\begin{equation*}
    f^*=\lambda\mathbb{I}_{A_1}+(1-\lambda)\mathbb{I}_{A_2} \ .
\end{equation*}
Finally, Proposition 3.1 in \cite{yang_multidimensional_2025} gives a characterization of monotone functions as probability distributions over `nested' up-sets; i.e., up-sets ordered by set inclusion. Therefore, $A_1 \subseteq A_2$.
This concludes the proof of Theorem \ref{thrm}.
\end{proof}
\subsection{Proof of Theorem \ref{optimalq}.}
\begin{proof}
Several steps are involved.
\begin{step}
    $\mathcal{F}$ is a subset of a vector space.
\end{step}
    Define 
    \begin{equation*}
    \mathcal{H} = \left\{ f:\X \times \mathcal{M}_{\text{signed}}\longrightarrow \mathbb{R} \ , \ f \ \text{bounded}\right\} \ ,
    \end{equation*}
    where $\M_{\text{signed}}$ is the set of finite signed measures. We give it the linear structure. Then, since $\Mprob$ is a subset of $\M_{\text{signed}}$, it follows that $\mathcal{F} \subseteq \mathcal{H}$. Hence, $\mathcal{F}$ is a subset of a vector space.
\begin{step}
   $\mathcal{F}$ is compact and convex.
\end{step}
Convexity readily follows because the convex combination of elements in $\mathcal{F}$ is increasing. Compactness requires some functional analytic tools.
\begin{lemma}
    $\mathcal{F}$ is compact with respect to the product topology.
\end{lemma}
\begin{proof}
$\mathcal{F}$ can be rewritten as
\begin{align*}
\mathcal{F} &= \{ \psi: \Mprob \rightarrow \mathcal{F'} \}\\
&= \prod_{\mu \in \Delta(\X)} \F' \ .
\end{align*}
Since each $f \in \F'$ is bounded and uniformly integrable, by Dunford-Pettis Theorem $\F'$ is compact with respect to the weak topology on $L^1(\mu)$, for each $\mu \in \Delta(\X)$.
Moreover, $\Delta(\X)$ is compact with respect to the weak topology.

Tychonoff's Theorem implies that $\mathcal{F}$ is compact in the product topology of weak topologies.
\end{proof}
\begin{step}
    Bauer's Maximum Principle.
\end{step}
It is straightforward to show that $\F^*$ is also compact and convex.
Convexity follows form the linearity of the constraints. Moreover, since $\F^*$ is closed and $\F^* \subseteq \F$, then $\F^*$ is compact.
 
We then apply the following theorem.
\setcounter{theorem}{-1}
\begin{theorem}[Bauer's Maximum Principle \cite{ball2023bauersmaximumprinciplequasiconvex}] \label{bauer}
Let $A$ be a locally convex, topological vector space, $B$ be a non-empty compact, convex subset of $A$, and $S: B \longrightarrow \mathbb{R}$ be a quasiconvex function. Then the set $E$ of maximizers of $S$ over $B$ is a face of $B$. Furthermore, $E$ contains an extreme point of $B$.
\end{theorem}
Clearly, \eqref{eq:functional} is continuous with respect to the weak topology on $L^1(\mu)$ for each $\mu \in \Mprob$; hence, it is continuous with respect to the product topology of weak topologies. Therefore, we know that maximizers of \eqref{eq:functional} are of the form of Theorem \ref{thrm}.

The following results shows that maximizers of \eqref{eq:functional} are also maximizers of \eqref{payoffexp}.
\begin{lemma}
$q^*$ is independent of $\mathbf{M}$.
\end{lemma}
\begin{proof}
Finally, define $q^*$ as in \eqref{optimalf}, for all  $\mu \in \Mprob$. Then, for any $q \in \mathcal{Q}$,  we have
\begin{align*}
\Pi(q)=\int_{\mathcal{M}_{\text {prob }}} \Pi(q \mid \mu) d \mathrm{M}(\mu)  &\leq \int_{\mathcal{M}_{\text {prob }}} \max _{q' \in \mathcal{Q}} \Pi(q' \mid \mu) d \mathrm{M}(\mu) \\
& =\int_{\mathcal{M}_{\text {prob }}} \Pi\left(q^* \mid \mu\right) d \mathrm{M}(\mu)=\Pi\left(q^*\right) \ .
\end{align*}
\end{proof}

Lastly, in order to apply Theorem \ref{thrmFanLorentz}, we need to show that \eqref{payoffexp} is convex in $q$ and supermodular in $(q,x)$. We have the following result.
\begin{lemma} \label{lemma:topkis}
    $x^*(q)$ is non-decreasing. In particular, $dx^*(q)/dq=0$.
\end{lemma}
\begin{proof}
    Since  $u(V)$ is non-decreasing in $x$, $U(x)$ is supermodular in $(q,x)$. By Topkis' Theorem, $x^*(q)$ is non-decreasing. Moreover, since $d^2U(x)/dqdx=0$, it follows from the same theorem that $dx^*(q)/dq=0$.
\end{proof}
Therefore, it follows that \eqref{payoffexp} is linear, and hence convex in $q$. Moreover, it is supermodular in $(q,x)$ if and only if $\pi$ is increasing in $x$. The submodular case is analogous.

\end{proof}
\subsection{Alternative Proof of Theorem \ref{optimalq}.}
\begin{proof}
We provide an alternative proof of Theorem \ref{optimalq}, without using Theorem \ref{thrmFanLorentz}.
\begin{proposition}
    $\Pi(f^*)\geq \Pi(f)$, for all $f\in \F^*$ if and only if $\pi(x)$ is increasing. $\Pi(\bar{f})\geq \Pi(f)$, for all $f\in \F^*$ if and only if $\pi(x)$ is decreasing.
\end{proposition}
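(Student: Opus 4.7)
The approach is to exploit the separability of
\[
\Pi(f) \;=\; \int_{\Mprob}\!\!\int_{\X} f(x,\mu)\pi(x)\, d\mu(x)\, d\mathbf{M}(\mu)
\]
across $\mu$. Because both the budget constraint \eqref{eq:mktclrgen} and the monotonicity requirement on $f(\cdot,\mu)$ are imposed pointwise in $\mu$, I would reduce the problem to showing, for every fixed $\mu \in \Mprob$, that the inner integral $\int_{\X} f(x,\mu)\pi(x)\, d\mu(x)$ is maximized by the single-threshold rule $f^*(\cdot,\mu)$ when $\pi$ is increasing and by the constant rule $\bar{f}\equiv k$ when $\pi$ is decreasing. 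Integrating against $\mathbf{M}$ afterwards lifts the statement to the global objective $\Pi$.

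For the sufficiency ``$\pi$ increasing $\Rightarrow \Pi(f^*)\geq\Pi(f)$'', I would invoke Bauer's Maximum Principle (Theorem \ref{bauer}) together with Theorem \ref{thrm} to restrict, for each $\mu$, to two-step up-set indicators $\lambda\mathbb{I}_{[a_1,\bar{x}]}+(1-\lambda)\mathbb{I}_{[a_2,\bar{x}]}$ with $a_1\geq a_2$. Among these, the single-threshold rule weakly dominates because, when $\pi$ is non-decreasing, the conditional mean $a\mapsto \mu([a,\bar{x}])^{-1}\int_{[a,\bar{x}]}\pi\, d\mu$ is non-decreasing in $a$ (a mediant-inequality computation comparing the averages on $[a_2,a_1)$ and $[a_1,\bar{x}]$). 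Thus concentrating the entire budget $k$ into the tightest admissible up-set $[\tilde{x},\bar{x}]$, determined by $\mu([\tilde{x},\bar{x}])=k$, maximizes the inner integral.

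For the sufficiency ``$\pi$ decreasing $\Rightarrow \Pi(\bar{f})\geq\Pi(f)$'', I would invoke Chebyshev's integral inequality for opposite-monotonicity: since $f(\cdot,\mu)$ is non-decreasing in $x$ while $\pi$ is non-increasing,
\[
\int_{\X} f(x,\mu)\pi(x)\, d\mu(x) \;\leq\; \Big(\!\int_{\X} f(x,\mu)\, d\mu(x)\Big)\Big(\!\int_{\X}\pi(x)\, d\mu(x)\Big) \;=\; k\int_{\X}\pi(x)\, d\mu(x),
\]
with equality at $\bar{f}\equiv k$. Integrating over $\mathbf{M}$ then gives $\Pi(\bar{f})\geq\Pi(f)$ for every $f\in\F^*$.

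For the necessity directions I would argue by contraposition: if $\pi$ fails to be non-decreasing, pick $x_1<x_2$ in $\X$ with $\pi(x_1)>\pi(x_2)$, use continuity of $\pi$ to find disjoint neighborhoods on which this strict ordering persists, and take $\mathbf{M}=\delta_\mu$ for a $\mu$ charging both neighborhoods. A small admissible perturbation of $f^*$ that shifts a sliver of its budget from above $x_2$ down to a neighborhood of $x_1$ (rebalancing the threshold to preserve \eqref{eq:mktclrgen}) stays non-decreasing in $x$ and strictly raises the objective, contradicting optimality of $f^*$. The symmetric construction handles the converse for $\bar{f}$. The main obstacle, and essentially the only delicate point, is engineering this local swap so that the perturbed rule remains in $\F^*$ while producing a strict payoff gain; the continuity of $\pi$ and the discrete jump available in $f^*$ (respectively, the flat profile of $\bar{f}$) are precisely what make this rearrangement feasible.
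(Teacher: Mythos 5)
Your proof is correct in its main thrust but takes a genuinely different route from the paper. The paper works with the value function $k \mapsto \Pi(k)$: it differentiates the single-threshold payoff to get $\Pi'(k)=\pi(\alpha(k))$, invokes a separate concavity lemma, handles the decreasing case by a mean-preserving variation of the two thresholds ($\alpha'(k)-\epsilon$, $\alpha''(k)+\epsilon$), and then treats atomic measures in a separate step via an explicit two-point computation. You instead compare rules directly for each fixed $\mu$: the mediant/conditional-average inequality for the increasing case, and Chebyshev's integral (covariance) inequality $\int f\pi\,d\mu \le \bigl(\int f\,d\mu\bigr)\bigl(\int \pi\,d\mu\bigr)$ for oppositely monotone $f$ and $\pi$ in the decreasing case. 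The Chebyshev step is a real improvement: it is a one-line, fully rigorous argument valid for arbitrary probability measures, so it absorbs the paper's atomic case and renders the concavity lemma unnecessary; the paper's variational route, in exchange, yields the marginal value of the budget $\Pi'(k)=\pi(\alpha(k))$, which has independent interest. Two caveats on your side. First, when $\mu$ has atoms the ``tightest up-set with $\mu([\tilde{x},\bar{x}])=k$'' need not exist, so in the increasing case $f^*$ must be read as the two-step rule taking an interior value at the threshold atom (as in \eqref{optimalfdirac}); your mediant comparison still goes through with the half-open interval $[\tilde{x},a_1)$ replaced by the appropriate mixture at the atom. Second, your necessity direction is only a sketch --- but so is the paper's, whose ``if and only if'' is likewise established essentially only in the sufficiency direction, so you are no worse off there.
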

\begin{proof}
We first consider the set of measures admitting density functions. Then, we turn to measures with atoms.
\begin{step}
Suppose that $\mu \in \Mprob$ admits a density $h_\mu$.
\end{step}
Denote $f^*=\mathbb{I}_{\left\{x \geq \alpha(k)\right\}}$. Then,
\begin{equation} \label{payoff*}
    \Pi(f^*)=\int_{\alpha(k)}^\infty \pi(x)h_\mu(x) \ dx \ ,
\end{equation}
where
\begin{equation} \label{constraint*}
 \int_{\alpha(k)}^1 h_\mu(x) \ dx =k \ .  
\end{equation}
Differentiating \eqref{payoff*} with respect to $k$ yields
\begin{align*}
\Pi'(k)&=-\alpha'(k)h_\mu(x)\pi(\alpha(k)) \\
& =\pi(\alpha(k)) \ ,
\end{align*}
where the second equality follows from substituting \eqref{constraint*} after differentiating with respect to $k$. We know from Lemma \ref{lemma:concavity} that $\Pi(k)$ is concave. If $\pi(\cdot)$ is increasing, then it must be that $\alpha(k)$ is decreasing in $k$. Moreover, for any $f=\beta \mathbb{I}_{\left\{x \geq \alpha'(k) \right\}} + (1-\beta)\mathbb{I}_{\left\{x \geq \alpha''(k) \right\}}$, it holds that $\Pi(f^*) \geq \Pi(f)$.

If $\pi(\cdot)$ is decreasing, then clearly $f^*$ cannot be optimal. We claim that the optimal $f$ is identically equal to $k$.
We have that
\begin{equation*}
    \Pi(f)=\int_{\alpha'(k)}^{\bar{x}} \beta \pi(x)h_{\mu}(x)\ dx + \int_{\alpha''(k)}^{\bar{x}} (1-\beta) \pi(x)h_{\mu}(x)\ dx
\end{equation*}
with the constraint
\begin{equation*}
 \int_{\alpha'(k)}^{\bar{x}} \beta h_{\mu}(x)\ dx + \int_{\alpha''(k)}^{\bar{x}} (1-\beta) h_{\mu}(x)\ dx   = k \ .
\end{equation*}
We consider a mean-preserving variation on the constraint by slightly lowering the first threshold, and by slightly increasing the second threshold:
\begin{align*}
    \int_{\alpha'(k)-\epsilon}^{\bar{x}} \beta h_{\mu}(x)\ dx + \int_{\alpha''(k)+\epsilon}^{\bar{x}} (1-\beta) h_{\mu}(x)\ dx   = k \ .
\end{align*}
By letting $\epsilon \rightarrow 0$, we obtain
\begin{align*}
 \frac{\partial}{\partial \epsilon} \Pi(f) \biggr\rvert_{\epsilon=0} &= \beta h_{\mu}(x)\pi(\alpha'(k))-(1-\beta)h_{\mu}(x)\pi(\alpha''(k)) \\
 &=\beta h_{\mu}(x)[\pi(\alpha'(k))-\pi(\alpha''(k)] \\
 &\geq 0 \ ,
\end{align*}
where the second equality follows from substituting the constraint into the objective function, and the last inequality follows because $\pi$ is decreasing.
\begin{step}
Suppose that $\mu \in \Mprob$ has atoms.
\end{step}
Suppose that $\mu$ is concentrated in two distinct points, $y$, $z \in \X$: $\mu=\gamma \delta_y+(1-\gamma)\delta_z$, with $\mu(y)=\mu(z)=\frac{1}{2}$, and $\frac{1}{2}<k<1$.
Then, we have the following claim.
\begin{claim}
A maximizer $f^*$ can have at most a single atom if and only if $\pi(x)$ is increasing. The constant function $\bar{f}$ is the (unique) maximizer if $\pi(x)$ is (strictly) decreasing.
\end{claim}
Suppose not. Then, there exists $f=\beta \mathbb{I}_{\left\{x > \alpha'(k) \right\}} + (1-\beta)\mathbb{I}_{\left\{x > \alpha''(k) \right\}}$ such that $\Pi(f) > \Pi(f^*)$ with $\pi(x)$ increasing. Suppose that $\alpha''(k)>z>y>\alpha'(k)$. It must hold that
\begin{equation*}
    \int_\X f(x,\mu) \ d\mu(x)=\beta \gamma+ \beta(1-\gamma)=\beta=k \ .
\end{equation*}
Then,
\begin{align*}
    \Pi(f) &= \int_X f(x, \mu) \ d\mu(x) \\
           &=\beta [\gamma\pi(y) + (1-\gamma)\pi(z)] \\
           &= k [\gamma\pi(y) + (1-\gamma)\pi(z)] \ .
\end{align*}
Now consider $f^*=\begin{cases}
    &1 \qquad \quad x > y \\
    &\beta \quad \quad \ \  x =y  \\
    &0 \qquad \ \ \  x < y  \ \ 
\end{cases}$ \ .
It must hold that
\begin{align*}
\int_\X f^*(x,\mu) \ d\mu(x) &= \beta \gamma + (1-\gamma)=k \\
                           &\iff \beta = \frac{k-(1-\gamma)}{\gamma} \ .
\end{align*}
Then,
\begin{align*}
    \Pi(f^*) &= \int_X f^*(x, \mu) \ d\mu(x) \\
             &= \beta \gamma \pi(y) + (1-\gamma) \pi(z) \\
             &=[k-(1-\gamma)]\pi(y) + (1-\gamma) \pi(z) \ .
\end{align*}
Hence,
\begin{align*}
    \Pi(f) > \Pi(f^*) \iff&  k [\gamma\pi(y) + (1-\gamma)\pi(z)] > [k-(1-\gamma)]\pi(y) + (1-\gamma) \pi(z) \\
                      \iff&(1-k)(1-\gamma)\pi(y)>(1-k)(1-\gamma) \pi(z)\\
                      \iff&\pi(y)>\pi(z)   \ ,
\end{align*}
a contradiction, since $\pi(z)\geq \pi(y)$. An analogous argument shows that, when $\pi(x)$ is decreasing, $\bar{f}$ is a maximizer.
\end{proof}
Then, each maximizer either belongs to - or it is equal to an element of 
  $\overline{\F^*}$, except on a set of measure zero. 
This concludes the proof.
\end{proof}
\pagebreak

\bibliographystyle{alpha}
\bibliography{ExtremePoints}

\pagebreak

\end{document}